
\documentclass[reqno,12pt]{amsart}

\setlength{\parskip}{\baselineskip}


\usepackage{amsmath}
\usepackage{amssymb}
\usepackage{amsfonts}
\usepackage{amsthm}
\usepackage[foot]{amsaddr}
\usepackage{mathtools}
\usepackage{bbm}
\mathtoolsset{%
}
\usepackage{relsize}

\usepackage[utf8]{inputenc}
\usepackage[T1]{fontenc}

\usepackage{libertine}
\usepackage[libertine,libaltvw,cmintegrals
]{newtxmath}


\usepackage{dsfont}

\usepackage{mathrsfs}



\usepackage[%
cal=cm,
]
{mathalfa}


\usepackage{accents}

\usepackage[dvipsnames,svgnames]{xcolor}
\colorlet{MyBlue}{DodgerBlue!60!Black}
\colorlet{MyGreen}{DarkGreen!85!Black}

\usepackage{fullpage}
\newcommand{\afterhead}{.}
%
%

\usepackage[font=small,labelfont=bf]{caption}
\captionsetup[algorithm]{labelfont={bf,sf,normalsize},font={small},labelsep=period}
\usepackage{subfigure}
\usepackage{tikz}
\usetikzlibrary{calc,patterns}
\usetikzlibrary{arrows,shapes,positioning}

\usepackage{acronym}
\usepackage{booktabs}       
\usepackage{latexsym}
\usepackage{paralist}
\usepackage{wasysym}
\usepackage{xspace}
\usepackage{comment}
\usepackage{longtable}
\usepackage[shortlabels]{enumitem}
\usepackage{pythontex}
\usepackage{listings}

\usepackage[authoryear,comma]{natbib}



\usepackage{hyperref}
\hypersetup{
colorlinks=true,
linktocpage=true,
pdfstartview=FitH,
breaklinks=true,
pdfpagemode=UseNone,
pageanchor=true,
pdfpagemode=UseOutlines,
plainpages=false,
bookmarksnumbered,
bookmarksopen=false,
bookmarksopenlevel=1,
hypertexnames=true,
pdfhighlight=/O,
urlcolor=MyBlue!60!black,linkcolor=MyBlue!70!black,citecolor=DarkGreen!70!black, 
pdftitle={},
pdfauthor={},
pdfsubject={},
pdfkeywords={},
pdfcreator={pdfLaTeX},
pdfproducer={LaTeX with hyperref}
}

\def\EMAIL#1{\email{\href{mailto:#1}{\texttt{\upshape #1}}}}

\numberwithin{equation}{section}  
\usepackage[sort&compress,capitalize,nameinlink]{cleveref}
\crefname{app}{Appendix}{Appendices}
\crefname{assumption}{Assumption}{Assumptions}

\crefrangeformat{equation}{\upshape(#3#1#4)\textendas\testh(#5#2#6)}



\DeclarePairedDelimiterX{\braket}[2]{\langle}{\rangle}{#1,#2}
\DeclarePairedDelimiterX{\inner}[2]{\langle}{\rangle}{#1,#2}
\DeclarePairedDelimiterX{\setdef}[2]{\{}{\}}{#1:#2}
\DeclarePairedDelimiterXPP{\probof}[1]{\prob}{(}{)}{}{%
#1}
\DeclarePairedDelimiterXPP{\exof}[1]{\ex}{[}{]}{}{%
#1}

\usepackage[textwidth=30mm]{todonotes}

\newcommand{\debug}[1]{#1}



\theoremstyle{plain}
\newtheorem{theorem}{Theorem}

\newtheorem*{corollary*}{Corollary}
\newtheorem{lemma}[theorem]{Lemma}


\theoremstyle{definition}
\newtheorem{definition}[theorem]{Definition}
\newtheorem*{definition*}{Definition}
\newtheorem{assumption}{Assumption}
\newtheorem*{assumption*}{Assumption}




\theoremstyle{remark}

\newtheorem*{remark*}{Remark}
\newtheorem*{notation*}{Notational remark}

\newtheorem{conjecture}[theorem]{Conjecture}
\newtheorem*{case*}{Case}




\DeclareMathOperator{\ex}{\debug{\mathbb{E}}}

\DeclareMathOperator{\prob}{\debug{\mathbb{P}}}














\setcitestyle{authoryear}

\title{Stochastic Consensus and the Shadow of Doubt}

\author{Emilien Macault$^{\dag,\ddag}$}
\address{$^{\dag}$ LEMMA, Université Paris 2 Panthéon-Assas, 16 rue Blaise Desgoffes, 75006 Paris, France}
\address{$^{\ddag}$ HEC Paris, 1 Rue de la Lib\'eration, 78350 Jouy-en-Josas, France}
\EMAIL{emilien.macault@hec.edu}

\begin{document}

\begin{abstract}
We propose a stochastic model of opinion exchange in networks. Consider a finite set of agents organized in a fixed network structure. There is a binary state of the world and, \textit{ex ante}, each agent is informed either about the true state of the world with probability $\alpha$ or about the wrong state with probability $1-\alpha$. We model beliefs as urns where white balls represent the true state and black balls the wrong state. Communication happens in discrete time and, at each period, agents draw and display one ball from their urn with replacement. Then, they reinforce their urns by adding balls of the colors drawn by their neighbors. We show that this process converges almost-surely to a stable state where all urns have the same proportion of balls. We show that this limit proportion is a random variable with full support over $\left[0,1\right]$. We propose a conjecture on the distribution of this limit proportion based on simulations.

\smallskip

\noindent \textbf{Keywords.} Reinforcement learning, opinion formation, consensus, non-Bayesian learning, stochastic approximation.

\end{abstract}
\maketitle

\newpage

\section{Introduction}

Despite public investments in media education and the development of counter-measures over the past few years, misinformation remains an ongoing issue with tangible consequences. The recent examples of the COVID-19 pandemics or the US presidential elections have highlighted how quickly inaccurate, deceptive or politically biased information spreads in a context of distrust towards experts and institutions.

Under pressure to limit the spread of deceptive content, media and open web companies have put in place a set of policies to regulate news contents on their platforms. Such policies mostly include source highlighting, fact checking and advertisement campaigns, all of which have proved to have a limited efficiency. The failure of counter-disinformation policies may be explained by theoretical modeling shortcomings. Indeed, most policies are based on the assumption that agents behave rationally when it comes to information processing. By displaying the limited trustworthiness level of a spurious source, they assume agents will revise their beliefs over secure sources and naturally evacuate false news. In terms of economic modeling, this translates into the assumption that agents behave in a Bayesian manner. In a multi-agent context, the computational limitations of Bayesian models have incentivized the emergence of non-Bayesian models as an alternative. Most of these models are based on \cite{DG74} and consist in setups where agents communicate by repeatedly averaging their opinions with their neighbors' until a steady state is reached. The strength of this framework is that beliefs quickly converge to a tractable limit.

Yet, two major limitations are to be opposed to models based on DeGroot dynamics. First, they have been shown to have limited robustness, in the sense that the repeated averaging overweights initial beliefs while enforcing fast convergence of beliefs. Second, one may question the relevance of a setup where agents directly access and exchange their beliefs on some state. In most small-world communication setups, it seems more realistic to assume that agents do not access the subjective probabilities they put on the possible values of a state, but rather decide to relay some information over another according to the relative probabilities they put on those events. In other terms, they do not exchange beliefs but draws based on beliefs. In this paper, we introduce and analyze a stochastic variant of DeGroot dynamics where agents behave in this respect.

Introducing some degree of stochasticity strongly changes the perspective on misinformation: in \cite{DG74}, slightly modifying the prior beliefs of some agents cannot change drastically the consensus outcome. Yet, most disinformation platforms display some "shadow of doubt" strategy: agents do not transmit false informations because they necessarily believe them to be true, but rather because there is some -- even small -- probability that they may not be false. In other words, they manage to disinform by inducing limited beliefs on their false informations.

\subsection{Contribution}

In this paper, we build an opinion formation model where agents communicate by drawing states according to their beliefs instead of directly communicating subjective probabilities. To do so, we model beliefs using reinforcing urns. Studying the evolution of beliefs comes down to characterizing the evolution of urns' compositions. Using stochastic approximation techniques, we show that in such models, under very general conditions, the dynamics of beliefs converge to a rest point. We show that at the steady-state, all agents share the exact same belief on the state of the world. As long as initial beliefs cover the whole state space, the consensus is drawn from a distribution with full support. This strongly contradicts the predictions of DeGroot's and similar models. We then try to characterize this limit distribution using simulations.

\subsection{Literature}




The emergence of consensus and its connection to learning heuristics is a long lasting question in theoretical economics. Two approaches are generally opposed to this problem: Bayesian and non-Bayesian models.

Bayesian models on the emergence of consensus mostly started with \cite{A76} and its seminal result that two agents with equal prior beliefs and common knowledge posteriors must have equal beliefs. Generalizations have been proposed by \cite{GP82} and \cite{PK90} who showed respectively that two players repeatedly communicating must agree in the long run and that a finite umber of players communicating in pairs will eventually agree. This is partly due to actions being observable, although \cite{BG98} proves that when agents are embedded in a connected network and observe the outcome of their actions with some noise, players are able to learn the true payoff distributions for actions that their neighbors take infinitely often, hence all actions converge to a consensual action. It is worth noting that although \cite{BG98}considers Bayesian agents, they limit their ability to compute beliefs by assuming that they do not make inferences on unobserved players and behave myopically. These results are closely related to social learning models and the observational learning literature as \cite{B92}, \cite{SS00} or \cite{RSV09}. \cite{MMST20} generalizes those results to a large class of social learning models by introducing the concept of \emph{social learning equilibrium} to study the asymptotic properties of learning processes and characterize conditions that agreement and herding behavior. In the same line, \cite{ADLO11} and \cite{AOB14} connect the emergence of social learning with Bayesian agents and the topology of the communication network.

Non-Bayesian consensus models emerged through \cite{DG74}, where the author introduces a model where agents living in a network repeatedly exchange their beliefs over some state of the world. At each stage, each agent replaces his belief by the average of his neighbors' beliefs. It is shown that if the communication network is connected, beliefs converge to a consensus which depends on initial beliefs and the network topology only. Variations on the updating rule have been proposed, for instance in \cite{FJ97} where authors allow some persistence on agents beliefs by including one's own belief in the averaging process. DeGroot's model gained popularity with \cite{GJ10} which connects consensus and learning with the network's adjacency matrix using properties of Markov chains steady-states. They refer to DeGroot's belief averaging dynamics as \emph{naive learning}. Examples of the use of \cite{DG74} in economic modeling are too numerous to be listed. In the recent literature, \cite{MV20} considers a stochastic game where two misinformers try to influence a population of agents applying naive learning. We refer the reader to \cite{AO11} and \cite{GS17} for surveys on both Bayesian and non-Bayesian learning in networks.

In this paper, we question the limits of DeGroot's model by allowing agents to communicate \textit{via} draws made according to their beliefs. In this regard, this work is in line with the existing literature on robustness of learning dynamics. \cite{ACY16} questions the predictions of Bayesian learning models by introducing uncertainty on the distribution of private signals agents receive. Criticism on DeGroot dynamics already featured in \cite{GJ10} where authors proved that in the general case, beliefs do not converge for countably infinite player sets. In a recent work, \cite{PAAA21} shows that in the presence of agents with fixed beliefs over time, which they call \emph{bots}, the common limit can converge to any value.

Our approach to this problem is based on the seminal Polya urn model from \cite{EP23}. In this paper, authors consider an urn with balls of several colors and study the convergence of reinforcement dynamics. It is a well known result that proportions in the urn converge to a beta distribution (see \cite{K13} for instance). The strength of Polya's model is its intricate connection with exchangeability. Central papers in the foundation of Bayesian inference like \cite{DF29} and \cite{HS55} heavily rely on the concept of exchangeability. Polya urn plays a particular role in that \cite{HLS87} proved that any exchangeable process of $\left\{0,1\right\}$-valued random variables if either Bernoulli, deterministic or generated by draws from a Polya urn. Our model considers a system of interacting urns in the flavor of \cite{PS04} which introduced such systems and first proved convergence when the number of balls in all urns grow at the same speed. Similarly, \cite{DLM14} shows convergence of proportions in a system where urn reinforcement depends both on proportions in each urn and on the average proportions in the system. \cite{CDM16} gives further results on convergence and fluctuations around the limit of such system. The model we consider is close yet different, as we consider a system where urns are reinforced at different speeds which correspond to their degree in the communication network. Usual probabilistic tools do not apply as proportions are not martingales and draws are not exchangeable. Instead, we rely on stochastic approximation as introduced by \cite{RM51}. Motivation for the use of stochastic approximation in the study of urn systems can be found in \cite{LP13}, where authors use this technique in the context of clinical trial modelling.

\section{The Model}

\subsection{Model} 

We consider a finite population $V=\left\{1,\dots N\right\}$ of $N$ agents, embedded in a exogeneous and fixed undirected graph $G=(V,E)$ with edge set $E$. We denote by $N(i)$ the neighborhood of any player $i$ and $d_i$ the degree of $i$. We denote by $A=(a_{ij})_{i,j \in N}$ the adjacency matrix of $G$, with the convention that $a_{ii}=0$ for all $i \in V$. Throughout, we will assume that $G$ is \emph{connected} that is, for every pair of nodes $i,j$ in $E$ there exists a path in $G$ connecting $i$ to $j$.

We consider a binary state space $\Theta=\{g,b\}$. At the beginning of the game, a state $\theta$ is drawn at random from $\Theta$ and is unobserved. Players initially receive a noisy signal informing them about the state of the world. Formally, with $\tilde{\theta}$ being the realized value of the state, player $i$ receives a signal 

$$\begin{cases} \theta=\tilde{\theta} \text{ with probability } \alpha \\ \theta=\tilde{\theta}^c \text{ with probability } 1-\alpha \end{cases}$$

With $\tilde{\theta}^c$ representing the complementary value of $\tilde{\theta}$ in $\Theta$. For a large network, by the law of large numbers, $\alpha$ represents the average proportion of agents initially well-informed.

\subsection{Beliefs and communication}

We model beliefs using urns of infinite capacity with balls of two colors representing the possible values of the state $\theta$: white balls represent the event $\theta=\tilde{\theta}$ and black balls represent the complementary event. At any time $t\ge 1$, the proportions of balls in agent $i$'s urn then represent $i$'s belief over those two events. Urns are initialized with a ball corresponding to the agent's signal.

We define the communication process as a discrete time dynamics. At each stage, players draw one ball with replacement from their urns with uniform probability. Draws are assumed to be pairwise independent. Every agent observes the colors drawn by their neighbors. Then, beliefs are updated by reinforcing urns, adding one ball of the corresponding color for each draw from their neighbors. The process is repeated infinitely.

At any given time, an agent's current belief on $\theta$ is given by the proportions of balls in his urn. Our objective is to study the evolution of the urn system and determine whether proportions converge, if a consensus is reached and if so, to characterize it given the network topology and the value of $\alpha$. 

\subsection{Example}

Consider $N=3$ agents connected in line as displayed in \cref{fi:ex}. Assume that at time $t=0$, agent 1 and 3 received a truthful signal and agent 2 got the wrong one. Then urns 1 and 3 will contain a white ball and urn 2 will contain one blackball. At time $t=1$, every player draws the only ball their urns contain and display it. Then, they all replace their draw and add a new ball of the corresponding color for every draw they observe. That is, at the end of the first stage, urns 1 and 3 will contain each one black ball and one white ball, and urn 2 will contain one black ball and two white balls. 

\begin{figure}[h]
\centering
\begin{tikzpicture}[thick]
	\tikzset{MidNode/.style = {draw, circle}}
    \node[MidNode](a) at (-2,0) {$1$};
		\node[MidNode](b) at (0,0){$2$};
		\node[MidNode](c) at (2,0){$3$};
		\draw[-](a)--(b);
		\draw[-](b)--(c);
\end{tikzpicture}
\caption{\label{fi:ex} Three urns in line.}
\end{figure}
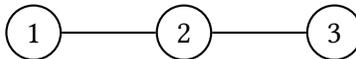

At time $t=2$, the draw and reinforcement procedure is repeated: the end urns will draw one black or one white ball with equal probability and the middle urn will draw one white ball with probability $2/3$ and a black ball with probability $1/3$.

\begin{table}[h]
\centering
  \begin{tabular}{ c | c | l }
		Vector of draws & Probability of occurence & Urns compositions \\
    \hline
    (W,W,W) & 1/6 & (2,1)--(4,1)--(2,1) \\ \hline
    (W,B,W) & 1/12 & (1,2)--(4,1)--(1,2) \\ \hline
		(B,W,W) & 1/6 & (2,1)--(3,2)--(2,1) \\ \hline
		(B,B,W) & 1/12 & (1,2)--(3,2)--(1,2) \\ \hline
		(W,W,B) & 1/6 & (2,1)--(3,2)--(2,1) \\ \hline
		(W,B,B) & 1/12 & (1,2)--(3,2)--(1,2) \\ \hline
		(B,W,B) & 1/6 & (2,1)--(1,4)--(2,1) \\ \hline
		(B,B,B) & 1/12 & (1,2)--(1,4)--(1,2) \\ \hline
    \hline
  \end{tabular}
\caption{Possible outcomes at time $t=2$.}
\label{tab:ex}
\end{table}

\cref{tab:ex} details the possible outcomes at time $t=2$. The left column is the vector of draws from urn 1, 2 and 3 respectively and the right column gives the compositions at the end of the time period in the same order. Left figures correspond to the number of white balls in the urn and right figures to the number of black balls.

\section{Results}

\subsection{Urn dynamics}

To ease the characterization of the dynamics, we introduce some notations. First, denote by $B_i^t$ and $W_i^t$ respectively the numbers of white and black balls in player $i$'s urn at time $t$. Define $S_i^t=B_i^t+W_i^t$ the total number of balls in player $i$'s urn at time $t$. One has that $S_i^{t+1}=S_i^t+d_i$ and $S_i^0=1$ hence $S_i^{t}1+d_i t$. Let $Z_i^t=B_i^t/S_i^t$ be the proportion of black balls in urn $i$ after step $t$ and $Z^t=(Z_i^t)_{i \in V}$. Finally, let $X_i^t$ be the indicator variable of a black draw for agent $i$ at time $t$, $X^t=(X_1^t,\dots,X_N^t)$ and let $\mathcal{F}_t$ be the sigma-field generated by the realizations of $(X^k), k\leq t$.

We derive the following dynamics:

\begin{equation}
\label{eq:dyn}
\begin{cases}
	B_i^{t+1}=B_i^t+\sum_{j \in N(i)}X_j^{t+1}\\
	W_i^{t+1}=W_i^t+d_i-\sum_{j \in N(i)}X_j^{t+1}\\
	S_i^{t+1}=S_i^t+d_i=1+d_i(t+1)
\end{cases}
\end{equation}

Hence 

\begin{equation}
\label{eq:expected_dyn}
\mathbb{E}\left[\Delta B_i^t|\mathcal{F}_t\right]=\mathbb{E}\left[\sum_{j \in N(i)} X_j^{t+1}| \mathcal{F}_t\right]=\sum_{j \in N(i)} Z_j^t
\end{equation}

\cref{eq:expected_dyn} shows that, in expectation, the belief updating process obeys some local averaging property as in canonical naive learning models: the variation of proportion in any urn evolves according to the proportions in the neighboring urns.

\subsection{Convergence of beliefs}

We first show that beliefs converge in the sense that color proportions in each urn converge to a stable point. The proof relies on stochastic approximation techniques, as usual probability methods do not apply in our case. Indeed, unless the graph $G$ is regular, neither local nor global proportions behave as martingales and it is easy to see that the process $(Z^t)$ is not exchangeable as the rate at which an urns evolve depends both its degree and time. Nevertheless, we are able to frame the dynamics as an algorithm for which we can prove convergence.

\begin{theorem}
\label{th:conv}
For any graph $G$, $\lim_{t\rightarrow \infty}Z_i^t$ exists almost-surely for any $i\in V$.
\end{theorem}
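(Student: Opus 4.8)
The plan is to cast the dynamics \eqref{eq:dyn} as a Robbins--Monro stochastic approximation scheme for the vector of black-ball proportions $Z^t = (Z_i^t)_{i \in V}$ and invoke the ODE method. First I would compute the one-step increment of $Z_i^t$. Writing $S_i^t = 1 + d_i t$ and using \eqref{eq:dyn}, one has
\begin{equation*}
Z_i^{t+1} - Z_i^t = \frac{B_i^{t+1}}{S_i^{t+1}} - \frac{B_i^t}{S_i^t} = \frac{1}{S_i^{t+1}}\Bigl( \sum_{j \in N(i)} X_j^{t+1} - d_i Z_i^t \Bigr).
\end{equation*}
Taking conditional expectation given $\mathcal{F}_t$ and using \eqref{eq:expected_dyn}, the drift is $\frac{d_i}{S_i^{t+1}}\bigl( \frac{1}{d_i}\sum_{j \in N(i)} Z_j^t - Z_i^t \bigr)$, i.e. proportional to $(P Z^t - Z^t)_i$ where $P$ is the row-stochastic random-walk matrix $P_{ij} = a_{ij}/d_i$. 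Thus, decomposing $Z^{t+1} - Z^t = \gamma_i^{t+1}\bigl[ (PZ^t - Z^t)_i + \xi_i^{t+1} \bigr]$ with step sizes $\gamma_i^{t+1} = d_i/S_i^{t+1} \sim 1/t$ and martingale-difference noise $\xi_i^{t+1} = \frac{1}{d_i}\bigl(\sum_{j \in N(i)} X_j^{t+1} - \sum_{j \in N(i)} Z_j^t\bigr)$, the scheme is of the form to which standard stochastic-approximation convergence theorems (e.g. Benaïm, or Kushner--Yin) apply: the associated mean-field ODE is the linear consensus flow $\dot z = (P - I)z$.

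The second step is to analyze this ODE. Since $G$ is connected, $P$ is an irreducible stochastic matrix; its eigenvalue $1$ is simple with left eigenvector the stationary distribution $\stationary_i \propto d_i$ and right eigenvector $\mathbf{1}$, and all other eigenvalues have real part strictly less than $1$ (here I would note $P$ is reversible w.r.t. $\stationary$, hence diagonalizable with real eigenvalues in $(-1,1]$ — actually in $[-1,1]$, but the possible eigenvalue $-1$ on bipartite graphs is harmless for the ODE since it still gives $\mathrm{Re} < 1$). Consequently every trajectory of $\dot z = (P-I)z$ converges to the consensus line $\{c\mathbf{1} : c \in \reals\}$, and the quantity $\stationary^\top z$ is invariant along the flow, so the limit is $c = \stationary^\top z(0)\,\mathbf{1}$. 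The set of equilibria is exactly this line, and it is globally attracting.

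The third step is to upgrade ``the trajectory tracks the ODE'' to ``$Z^t$ converges almost surely.'' The noise is bounded ($|\xi_i^{t+1}| \le 1$) and the step sizes satisfy $\sum_t \gamma_i^t = \infty$, $\sum_t (\gamma_i^t)^2 < \infty$, so the standard hypotheses hold; moreover $Z^t$ lives in the compact set $[0,1]^N$, so there is no stability/boundedness issue. The affine function $V(z) = \stationary^\top z$ is a martingale-like invariant: $\expect[V(Z^{t+1}) \mid \mathcal{F}_t]$ would equal $V(Z^t)$ were the step sizes equal across $i$, but since $\gamma_i^{t+1}$ depends on $i$ this is only approximately true. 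The cleanest route is therefore not to rely on a conserved quantity at the discrete level but to apply the ODE theorem directly: the limit set of $(Z^t)$ is a.s. a connected, internally chain-transitive set of the flow, and since the flow's only such sets are single points on the consensus line, $Z^t$ converges a.s. to some (random) point $Z^\infty_i = c$ for all $i$.

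The main obstacle is the heterogeneity of the step sizes $\gamma_i^{t+1} = d_i/(1 + d_i(t+1))$: the urns are reinforced at different rates, so this is a stochastic approximation with coordinate-dependent gains rather than a scalar gain. I would handle this by observing that $\gamma_i^{t+1} = \frac{1}{t+1} + O(1/t^2)$ uniformly in $i$, so writing $\gamma_i^{t+1} = \frac{1}{t+1}(1 + \varepsilon_i^t)$ with $\varepsilon_i^t \to 0$ absorbs the discrepancy into a vanishing perturbation of a common-gain scheme whose mean field is the same consensus ODE (the positive diagonal rescaling $\mathrm{diag}(1+\varepsilon_i^t)(P-I)$ does not change the equilibrium set or the chain-transitivity analysis). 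Verifying carefully that these perturbations are summable in the right sense, and that the resulting asymptotic pseudotrajectory property still holds, is the technical heart of the argument; everything else is routine linear algebra on $P$ and a citation to the ODE method.
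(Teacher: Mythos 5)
Your overall route is essentially the paper's: cast the increment of $Z^t$ as a Robbins--Monro scheme with steps of order $1/t$, bounded martingale-difference noise, and mean field given by a degree-rescaled negative Laplacian, then invoke the ODE method. The paper handles the heterogeneous reinforcement rates by multiplying through by $\tfrac{1+d(t+1)}{1+d_i(t+1)}$ with $d=\min_i d_i$, so as to get an exactly common gain $\gamma^t=1/(1+d(t+1))$ and a time-dependent field $f^t\to\bar f$ checked against the Kushner--Yin conditions; your bookkeeping with coordinate-dependent gains $\gamma_i^{t+1}=d_i/(1+d_i(t+1))=\tfrac{1}{t+1}+O(t^{-2})$ is an equivalent way of doing the same thing, and your limit ODE $\dot z=(P-I)z$ coincides with the paper's $\bar f$ up to the constant factor $d$.

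The one step that does not hold as stated is the claim that the only internally chain-transitive sets of the consensus flow are singletons on the consensus line. Since the equilibria form a continuum (the segment $\{c\mathbf{1}: c\in[0,1]\}$), every closed connected subset of it is internally chain transitive: an $(\varepsilon,T)$-chain may simply hop by $\varepsilon$-jumps along the line of rest points. So the asymptotic-pseudotrajectory/limit-set theorem only yields a.s.\ convergence of $Z^t$ to a connected subset of the consensus set, not existence of the limit --- which is, incidentally, also where the paper's own proof stops (Kushner--Yin gives convergence to the invariant set). The quasi-martingale you mention and then set aside is exactly the tool that closes this gap: with $\rho_i\propto d_i$ the left Perron vector of $P$, one has $\expect[\rho^\top Z^{t+1}\mid\mathcal{F}_t]-\rho^\top Z^t=\sum_i \rho_i\bigl(\gamma_i^{t+1}-\tfrac{1}{t+1}\bigr)(PZ^t-Z^t)_i=O(t^{-2})$, so $\rho^\top Z^t$ is a bounded quasi-martingale and converges a.s.; combined with convergence to the consensus line this gives existence of $\lim_{t}Z_i^t$ for every $i$. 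I would reinstate that argument rather than rely on chain transitivity alone.
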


\begin{proof}

From \cref{eq:dyn} we derive the following recursive formula on $Z_i^t$:

\begin{equation}
Z_i^{t+1}-Z_i^t=\cfrac{-d_i Z_i^t + \sum_{j \in N(i)} X_j^{t+1}}{1+d_i(t+1)}
\end{equation}

By adding and subtracting the conditional expectation of the number of black draws in neighboring urns to the numerator, we have:

\begin{equation}
Z_i^{t+1}-Z_i^t=\cfrac{-d_i Z_i^t + \mathbb{E}\left[\sum_{j \in N(i)} X_j^{t+1}| \mathcal{F}_t\right] + \sum_{j \in N(i)} X_j^{t+1} - \mathbb{E}\left[\sum_{j \in N(i)} X_j^{t+1}| \mathcal{F}_t\right]}{1+d_i (t+1)}
\end{equation}

Observing that, conditional on $\mathcal{F}_t$, the expected number of black draws in neighboring urns at time $t+1$ is equal to the sum of their proportions at time $t$, we obtain:

\begin{equation}
Z_i^{t+1}-Z_i^t=\cfrac{-d_i Z_i^t + \sum_{j \in N(i)} Z_j^{t} + \sum_{j \in N(i)} X_j^{t+1} - \mathbb{E}\left[\sum_{j \in N(i)} X_j^{t+1}| \mathcal{F}_t\right]}{1+d_i (t+1)}
\end{equation}

We now rescale the equation by a factor that is independent of $d_i$:

\begin{equation}
Z_i^{t+1}-Z_i^t=\cfrac{\cfrac{1+d(t+1)}{1+d_i (t+1)}\left(-d_i Z_i^t + \sum_{j \in N(i)} Z_j^{t} + \sum_{j \in N(i)} X_j^{t+1} - \mathbb{E}\left[\sum_{j \in N(i)} X_j^{t+1}| \mathcal{F}_t\right]\right)}{1+d(t+1)}
\end{equation}

Where $d=\min_i d_i$. Finally, we obtain the following system:

\begin{equation}
\label{eq:system}
Z^{t+1}-Z^t=\gamma^t\left[f^t(Z^t)+u^t\right]
\end{equation}

Where 

\begin{equation}
\begin{cases}
\gamma^t=\cfrac{1}{1+d(t+1)},\\ 
 \\
f^t:\left[0,1\right]^N \to \left[0,1\right]^N\\
\qquad\quad Z^t \mapsto (f_1^t(Z^t),\dots,f_N^t(Z^t))\\
 \\
\text{with } f_i^t(Z^t)=\cfrac{1+dt}{1+d_i t}\sum_{j \in N(i)} Z_j^t - d_i Z_i^t,\\
 \\
u^t=(u_1^t,\dots,u_N^t),\\
\text{with } u_i^t=\cfrac{1+dt}{1+d_i t}\sum_{j \in N(i)} X_j^{t+1} - \mathbb{E}\left[\sum_{j \in N(i)} X_j^{t+1}| \mathcal{F}_t\right]
\end{cases}
\end{equation}

\medskip

In order to ensure convergence of the stochastic system \cref{eq:system}, we first make the following observations.

\medskip

\begin{assumption}
\label{obs:step}
\begin{equation}
\begin{cases}
\sum_{t=1}^{\infty} \gamma^t = \infty \\
\sum_{t=1}^{\infty} \left(\gamma^t\right)^2 < \infty
\end{cases}
\end{equation}
\end{assumption}

\cref{obs:step} is central in any stochastic approximation algorithm \textit{à la} \cite{RM51} with deterministic weights. These weights serve as the increments of time discretization. In that perspective, the first point implies that the algorithm will cover the entire time interval. The second point involves, jointly with the next observation, the disappearing of noise in the limit. As $\gamma^t$ is of the order of $1/t$, \cref{obs:step} is immediate.

\bigskip

\begin{assumption}
\label{obs:noise}
For every $i$ in $V$, the sequence $(u_i^t)$ is a martingale difference noise relative to $\mathcal{F}_t$.
\end{assumption}

\cref{obs:noise}, when combined with the second point in \cref{obs:step}, ensures that the cumulative error due to the discretization noise is negligible almost-surely, as the noise variance will vanish asymptotically. \cref{obs:noise} holds as, for any $i\in V$, the sequence $(u_i^t)$ is a sequence of bounded random variables with zero mean.

\bigskip

\begin{assumption}
\label{obs:f}
The maps $f_i^t$ are Lipschitz continuous and measurable with respect to $\mathcal{F}_t$ and uniformly continuous in $t$ for $t\ge 1$.
\end{assumption}

Stochastic approximation ensures that a discrete-time stochastic process evolves along the trajectories of a continuous time ordinary differential equation. In that respect, \cref{obs:f} ensures that the ODE is well defined and has a unique solution.
 
Finally, although the maps $(f_i^t)_{i,t}$ in \cref{eq:system} depend on time, for any $i\in V$, the sequence of maps $(f_i^t)$ converge to a time-independent limit as time goes to infinity. Indeed, for any $i \in V$ and any $z\in\left[0,1\right]^N$, let $\bar{f}_i(z)=\frac{d}{d_i}\sum_{j \in N(i)} z_j - d_i z_i$ and $\bar{f}:z \mapsto \left(\bar{f}_1(z), \dots, \bar{f}_N(z)\right)$.

\begin{assumption}
\label{obs:cont}
For any $z \in \left[0,1\right]^N$ and any $k\in \mathbb{N}^*$,

$$
\lim_{s \to \infty} \left|\sum_{t=s}^{s+k} \gamma^t\left[f_i^t(z)-\bar{f}_i(z)\right]\right| \to 0
$$
\end{assumption}

\cref{obs:cont} holds immediately as, for any $i \in V$, $f_i^t \to \bar{f}_i$ as $t \to \infty$.

Based on \crefrange{obs:step}{obs:cont}, we can apply Theorem 2.3 from \protect{\citet[chapter 5]{KY03}}.

\begin{theorem}[\protect{\citet[chapter 5]{KY03}}]
If \crefrange{obs:step}{obs:cont} hold and $\left(Z^t\right)$ is bounded with probability one, then for almost all $\omega$, the limits $\bar{Z}(\omega)$ of convergent subsequences of $\left(Z^t(\omega)\right)$ are trajectories of 

\begin{equation}
\label{eq:ode}
\dot{z_i^t}=\bar{f}(z^t)
\end{equation}

in some bounded invariant set and $\left(Z^t(\omega)\right)$ converges to this invariant set.

\end{theorem}

This result ensures that the system \cref{eq:system} evolves almost-surely along trajectories of \cref{eq:ode} and converges to the set of asymptotically stable points of the ordinary differential system.

\end{proof}

This first result ensures that for any graph structure $G$ and any initial condition on the urns, proportions converge almost-surely to a stable point. In particular, convergence is independent of the initial signal structure and applies for any alternative initialization of the system. The next result details when a consensus emerges.

\subsection{Emergence of Consensus}

\begin{theorem}
\label{th:consensus}
Suppose that the graph $G$ is connected. Then for any $i,j \in V$, $\lim_{t\rightarrow \infty}Z_i^t=\lim_{t\rightarrow \infty}Z_j^t$ almost-surely.
\end{theorem}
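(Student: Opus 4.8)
The plan is to combine the ODE characterization from \cref{th:conv} with a Lyapunov argument tailored to the limiting vector field $\bar f$. By \cref{th:conv}, almost surely $Z^t$ converges to some (random) point $Z^\infty$ which must be a rest point of the ODE $\dot z = \bar f(z)$, i.e.\ $\bar f_i(Z^\infty)=0$ for every $i$. Writing out $\bar f_i(z)=0$ gives $d_i z_i = \frac{d}{d_i}\sum_{j\in N(i)} z_j$, so that $z_i = \frac{d}{d_i^2}\sum_{j\in N(i)} z_j$. The first step is therefore to show that the only solutions of this linear system lying in $[0,1]^N$ are the constant vectors $z = c\mathbf 1$. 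Actually this needs care: the constant vector $c\mathbf 1$ satisfies the equation only if $d_i = \tfrac{d}{d_i}\cdot d_i$, i.e.\ $d_i=d$ for all $i$, which is false for non-regular graphs — so the rest points of $\bar f$ are \emph{not} exactly the consensus states, and a naive ``rest point $\Rightarrow$ consensus'' argument fails. The correct route is to argue directly about the \emph{martingale-like} structure of the original proportions rather than about $\bar f$.

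So instead, the first real step is to identify a genuine near-martingale. From \cref{eq:dyn}, $\mathbb E[\Delta B_i^t\mid\mathcal F_t]=\sum_{j\in N(i)}Z_j^t$, hence $\mathbb E[B_i^{t+1}\mid\mathcal F_t] = B_i^t + \sum_{j\in N(i)} Z_j^t$. Consider the degree-weighted global average $M^t := \frac{1}{2|E|}\sum_{i\in V} d_i Z_i^t = \frac{1}{2|E|}\sum_i d_i B_i^t/S_i^t$; because $S_i^t$ depends on $i$ this is not literally a martingale, but the discrepancy is $O(1/t)$. The cleaner object is $N^t := \frac{1}{\sum_i S_i^t}\sum_i B_i^t$, the global proportion of black balls; one computes $\mathbb E[\sum_i B_i^{t+1}\mid\mathcal F_t] = \sum_i B_i^t + \sum_i\sum_{j\in N(i)} Z_j^t = \sum_i B_i^t + \sum_j d_j Z_j^t$, and since $\sum_i S_i^{t+1} = \sum_i S_i^t + 2|E|$, the increment of $N^t$ is driven by the gap between the degree-weighted average $\frac{1}{2|E|}\sum_j d_j Z_j^t$ and $N^t$ itself. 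The plan is to show these two quantities have the same limit (their difference is a bounded quantity whose increments are summable-in-expectation plus martingale noise with summable variance, since each step changes things by $O(1/t)$), and then to use \cref{th:conv}: $Z_i^t\to Z_i^\infty$ for each $i$, so both averages converge to weighted averages of the $Z_i^\infty$.

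The decisive step is then a contraction/consensus argument for the limit. Having established convergence $Z_i^t\to Z_i^\infty$ a.s., define $D^t := \max_i Z_i^t - \min_i Z_i^t \geq 0$, which converges to $D^\infty := \max_i Z_i^\infty - \min_i Z_i^\infty$. I would show $D^\infty = 0$ by exploiting connectedness: if $D^\infty>0$, pick $i^\star$ attaining the max of $Z_i^\infty$ with a neighbor $j$ with $Z_j^\infty < Z_{i^\star}^\infty$ (such an edge exists by connectedness unless $D^\infty=0$). The rest-point equation $d_{i^\star} Z_{i^\star}^\infty = \frac{d}{d_{i^\star}}\sum_{k\in N(i^\star)} Z_k^\infty$ gives a relation, but as noted this does not immediately force equality for non-regular graphs. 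The honest main obstacle — and where I would concentrate the effort — is precisely this: the limiting ODE's rest points strictly contain the consensus manifold, so almost-sure convergence to a rest point is \emph{not} enough. One must instead show the dynamics cannot converge to a non-consensus rest point, e.g.\ by a stability analysis of the ODE (linearizing $\bar f$ at a non-consensus rest point and showing it has an unstable direction, hence is not in the attainable invariant set), or, more robustly, by returning to the stochastic recursion and showing $\mathbb E[(D^t)^2]$ or some weighted variance $V^t := \sum_i d_i (Z_i^t - N^t)^2$ is a supermartingale up to summable corrections, with strictly negative drift proportional to $\sum_{(i,j)\in E}(Z_i^t-Z_j^t)^2$ — the graph Laplacian quadratic form — which by connectedness forces $V^t\to 0$ and hence consensus. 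I expect the variance-contraction computation, controlling the cross terms between the Laplacian drift and the martingale noise, to be the technical crux.
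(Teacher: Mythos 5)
There is a genuine gap, and it is exactly where you park it: your proposal never proves the consensus step, it only lists candidate strategies (linearize $\bar f$ at a non-consensus rest point, or show $V^t=\sum_i d_i(Z_i^t-N^t)^2$ is an almost-supermartingale with Laplacian drift) and explicitly defers the ``variance-contraction computation'' that would be the actual proof. Moreover, the obstacle that pushes you onto this detour is not real. The factor $d/d_i$ in $\bar f_i$ arises as the limit of the ratio $\tfrac{1+d(t+1)}{1+d_i(t+1)}$, which in the rescaled recursion multiplies the \emph{entire} numerator $-d_iZ_i^t+\sum_{j\in N(i)}Z_j^t+\text{noise}$; the displayed formula is simply misparenthesized. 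The limiting field is therefore $\bar f_i(z)=\tfrac{d}{d_i}\bigl(\sum_{j\in N(i)}z_j-d_iz_i\bigr)=-\tfrac{d}{d_i}(Lz)_i$, i.e.\ a positive diagonal rescaling of $-Lz$ with factors $d/d_i\in[1/N,1]$. Such a rescaling changes neither the zero set nor stability, so the rest points of \cref{eq:ode} are exactly $\ker L$, which for a connected graph is $\mathrm{span}(\mathbf 1)$ — the consensus line. That two-line Laplacian-kernel argument, combined with the convergence theorem, is precisely the paper's proof; the ``naive rest point $\Rightarrow$ consensus'' route you dismissed is the intended one, and your claim that the rest points strictly contain the consensus manifold is an artifact of the literal reading.

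Your alternative route is legitimate in principle and would even buy something: a direct stochastic Lyapunov argument on $V^t$ (drift bounded by $-c\,t^{-1}\sum_{(i,j)\in E}(Z_i^t-Z_j^t)^2$ plus summable noise, then a Robbins--Siegmund-type conclusion) would bypass the stochastic-approximation machinery and the ODE limit entirely, which is more robust than the paper's argument. But none of that is carried out here: the drift inequality, the control of cross terms, and the step from a non-summable $O(1/t)$ drift to $V^t\to 0$ are all left as intentions. The near-martingale discussion of the global proportion $N^t$ is also beside the point for this theorem (it concerns the mean of the limit, i.e.\ the conjectures, not consensus). The minimal fix is to recompute $\bar f$ correctly, observe $\bar f(z)=-dD^{-1}Lz$ with equilibria $\ker L=\mathrm{span}(\mathbf 1)$ by connectedness, and invoke the convergence theorem — i.e.\ the paper's route.
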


\begin{proof}

From \cref{th:conv}, we know that proportions converge along the trajectories of \cref{eq:ode}, that is:

\begin{equation}
\dot{z_i^t}=\cfrac{d}{d_i}\sum_{j \in N(i)} z_j^t - d_i z_i^t
\end{equation}

As $\frac{1}{N}\le \frac{d}{d_i} \le 1$, stable points of \cref{eq:ode} belong to the set of stable points of 

\begin{equation}
\dot{z_i^t}=\sum_{j \in N(i)} z_j^t - d_i z_i^t
\end{equation}

i.e. 

\begin{equation}
\dot{z}^t=-Lz^t
\end{equation}

Where $L$ is the Laplacian matrix of the graph $G$, i.e. $L=D-A$ with $D$ the diagonal matrix of degrees. 

Thus, Lyapounov stable solutions of \cref{eq:ode} belong to the nullspace of $-L$, as $L$ is symmetric, positive semi-definite. As $G$ is connected, this nullspace is of dimension 1 and is characterized by the eigenvector $(1,...,1)$ as the sum of each row in $L$ equals zero. As $-L$ is negative, the entire set is Lyapounov stable.\\


\end{proof}

\begin{theorem}
\label{th:conv}
For any connected graph $G$, if $\alpha\in\left(0,1\right)$, then the limit belief $\bar{Z}$ is a non-trivial distribution with full support on $\left[0,1\right]$.
\end{theorem}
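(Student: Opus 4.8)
The plan is to prove the stronger statement that $\mathbb{P}(\bar Z\in(a,b))>0$ for every $0\le a<b\le1$; since each $Z_i^t$ is a proportion we have $\bar Z\in[0,1]$ almost surely, so this yields both full support on $[0,1]$ and non‑degeneracy. Here $\bar Z$ denotes the common value $\lim_t Z_i^t$, which exists almost surely and is independent of $i$ by \cref{th:consensus} together with the convergence established above; I assume $N\ge2$, so connectedness forces $d_i\ge1$ for all $i$. Fix $(a,b)$ and set $m=(a+b)/2\in(0,1)$. The argument combines a \emph{reachability} step — with positive probability there is a finite time $T_0$ at which all $Z_i^{T_0}$ are within a small $\epsilon$ of $m$ — and a \emph{localisation} step — conditionally on $\mathcal F_{T_0}$, the limit $\bar Z$ stays within $\epsilon$ of the current degree‑weighted average with conditional probability bounded away from zero once $T_0$ is large. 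Choosing $\epsilon$ of order $b-a$ then closes the argument.

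For reachability I would first record a combinatorial lemma: every connected graph on $N\ge2$ vertices admits a $2$‑colouring of $V$ in which no closed neighbourhood $\{i\}\cup N(i)$ is monochromatic (take any non‑constant colouring and repeatedly flip the colour of the centre of a monochromatic closed neighbourhood; each flip strictly decreases the number of monochromatic closed neighbourhoods, creates no new one, and keeps the colouring non‑constant). Because ``correct/incorrect signal'' is defined relative to the realised state, the event that the initial signal profile $X^1$ realises such a colouring has probability at least $\min(\alpha,1-\alpha)^N>0$, and on it every urn contains at least one ball of each colour from time $1$ on; this persists forever, so conditioning on any prescribed vector of draws at any later step has positive conditional probability. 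Now, with $K=T_0-1$ and $n^\ast=\lfloor mK\rfloor\in\{0,\dots,K\}$, let every agent draw black on steps $2,\dots,n^\ast+1$ and white on steps $n^\ast+2,\dots,K+1$; call $E\in\mathcal F_{T_0}$ this positive‑probability event. On $E$ one has $B_i^{T_0}=B_i^1+n^\ast d_i$, hence
\[
\bigl|Z_i^{T_0}-m\bigr|=\frac{\bigl|B_i^1+n^\ast d_i-m(1+d_iT_0)\bigr|}{1+d_iT_0}\le\frac{c^\ast}{1+T_0},
\]
where $c^\ast$ depends only on $G$ (using $|n^\ast-mT_0|<2$, $0\le B_i^1\le1+d_i$, $d_i<N$), and this is $<\epsilon$ once $T_0>c^\ast/\epsilon$.

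For localisation, consider $\psi^t=\sum_i d_iZ_i^t$, which converges almost surely to $\bar Z\sum_i d_i$. A short computation from \cref{eq:dyn} gives $\mathbb{E}[\Delta\psi^t\mid\mathcal F_t]=\sum_{\{i,j\}\in E}(g_i^t-g_j^t)(Z_j^t-Z_i^t)$ with $g_i^t=d_i/(1+d_i(t+1))$; since $x\mapsto x/(1+x(t+1))$ is increasing and lies in $[\tfrac1{t+2},\tfrac1{t+1})$ on $[1,\infty)$, we get $|g_i^t-g_j^t|\le\frac1{(t+1)(t+2)}$, so $|\mathbb{E}[\Delta\psi^t\mid\mathcal F_t]|\le|E|/(t+1)^2$. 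Also $|\Delta\psi^t|\le\sum_i d_i|\Delta Z_i^t|\le2|E|/(t+1)$, so $\xi^t:=\Delta\psi^t-\mathbb{E}[\Delta\psi^t\mid\mathcal F_t]$ is a martingale difference with $|\xi^t|\le3|E|/(t+1)$. Writing $\bar Z\sum_i d_i-\psi^{T_0}=\sum_{t\ge T_0}\mathbb{E}[\Delta\psi^t\mid\mathcal F_t]+\sum_{t\ge T_0}\xi^t$, the first sum is bounded deterministically by $|E|/T_0$, and Doob's $L^2$ maximal inequality applied to the second (a martingale relative to $(\mathcal F_s)_{s\ge T_0}$) gives $\mathbb{E}[\sup_{s\ge T_0}|\psi^s-\psi^{T_0}|^2\mid\mathcal F_{T_0}]\le C/T_0$ for some $C=C(G)$, whence by conditional Markov $\mathbb{P}(|\bar Z\sum_i d_i-\psi^{T_0}|>\delta\mid\mathcal F_{T_0})\le C/(\delta^2T_0)$.

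To conclude: on $E$ we have $|\psi^{T_0}-m\sum_i d_i|\le\epsilon\sum_i d_i$, so on $E$, with $\mathcal F_{T_0}$‑conditional probability at least $1-C/(\delta^2T_0)$, one has $|\bar Z-m|\le\epsilon+\delta/\sum_i d_i$. Taking $\epsilon=\delta/\sum_i d_i=(b-a)/8$ and $T_0$ large enough that $c^\ast/(1+T_0)<\epsilon$ and $C/(\delta^2T_0)<1$, the event $\{\bar Z\in(a,b)\}$ has positive $\mathcal F_{T_0}$‑conditional probability on $E$, so $\mathbb{P}(\bar Z\in(a,b))\ge\mathbb{E}[\mathbf 1_E\,\mathbb{P}(\bar Z\in(a,b)\mid\mathcal F_{T_0})]>0$. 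Since $(a,b)$ was arbitrary, $\bar Z$ has full support on $[0,1]$ and is non‑degenerate. The main obstacle is the localisation step: no exact conserved quantity is available — the $d_i$‑dependent denominators break the naive degree‑weighted martingale — but $\psi^t$ is an \emph{asymptotic} martingale whose drift is summable at rate $1/t^2$, and this is precisely what lets a maximal inequality pin the limit near its current value with overwhelming probability for large $T_0$. The reachability step is more routine, its one real point being the colouring lemma that seeds every urn with both colours.
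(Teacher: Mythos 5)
Your argument is correct, and it takes a genuinely different---and in fact more complete---route than the paper. The paper's proof goes through the notion of attainability: it states in \cref{le:att} (with only a one-line sketch, based on increments of order $1/t$ and $\mathbb{P}(Z_i^t\in(0,1))>0$) that every consensus point is attainable, and then stops, explicitly noting that it remains to show that attainable points of the limit set belong to the support of the limit; that final step, which in the stochastic-approximation literature requires a nondegenerate-noise argument, is never carried out. Your two-step scheme supplies precisely this missing content by elementary means: the reachability step (the closed-neighbourhood colouring lemma plus a prescribed block of black-then-white draws) is a rigorous version of the paper's attainability sketch, while the localisation step---noting that $\psi^t=\sum_i d_iZ_i^t$ is an asymptotic martingale whose drift, rewritten as $\sum_{\{i,j\}\in E}(g_i^t-g_j^t)(Z_j^t-Z_i^t)$, is $O(1/t^2)$, and then pinning $\lim_s\psi^s$ near $\psi^{T_0}$ via Doob's inequality and conditional Chebyshev---is exactly the ``attainable points lie in the support'' step the paper leaves open. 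The paper's route, if completed, would apply to general limit sets of the ODE \cref{eq:ode} without hunting for a near-conserved quantity; yours buys a self-contained, quantitative proof with the explicit bound $C/(\delta^2T_0)$ and no appeal to support theorems for stochastic approximation. Two points to make explicit in a write-up: you need the stage-$t$ draws to be jointly (conditionally) independent across agents---the paper only asserts pairwise independence, which alone does not give positive probability to a prescribed draw vector---and your use of the common limit $\bar Z$ rests on the almost-sure convergence theorem together with \cref{th:consensus}, which is consistent with how the statement itself is phrased.
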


\begin{proof}

The proof is based on the concept of \emph{attainability} from \cite{B99}.

\begin{definition}
A point $p\in \mathbb{R}^N$ is \emph{attainable} by $Z$ if for every $t >0$ and every open neighborhood $U$ of $p$, 

$$
\mathbb{P}\left(\exists s \ge t: Z^s \in U\right)>0.
$$
\end{definition} 

In other terms, a point $p$ is attainable if, from any vector of proportions, there is a strictly positive probability that $Z^t$ becomes arbitrarily close to $p$ in finite time. Let $L_{\bar{f}}$ denote the set of equilibrium points of \cref{eq:ode} intersected with $\left[0,1\right]^N$ that is, $L_{\bar{f}}=\left\{z\in\left[0,1\right]| z_i=z_j \forall i,j \le N\right\}$. We establish the following lemma.

\begin{lemma}
\label{le:att}
Any point $p$ in $L_{\bar{f}}$ is attainable.
\end{lemma}

To prove \cref{le:att}, simply observe that, from \cref{eq:dyn}, $Z_i^{t+1}-Z_i^t$ is of the order of $\frac{1}{t}$. If $\alpha\in\left(0,1\right)$, $\mathbb{P}\left(Z_i^t \in \left(0,1\right)\right)>0$ for every $i$ and $t\ge 0$.

We showed that any point in $L_{\bar{f}}$ is attainable. To complete the proof of \cref{th:conv}, it remains to show that any attainable point in $L_{\bar{f}}$ belongs to the support of $Z^{\infty}$.

\end{proof}

Observe that if $\alpha=0$ or $\alpha=1$, urns in the system display only one color hence beliefs will remain at their original value forever.

The next section provides some empirical evidence on the limit distribution of beliefs.

\section{Limit Distribution}

While our efforts in characterizing the limit distribution of the consensus $Z$ as a function of $G$ and $\alpha$ failed, large scale simulations provide some useful evidence. We simulated the learning dynamics on three network structures: stars, regular graphs with varying degree and complete networks. The values of the limit belief were simulated for different values of $\alpha$. Main elements of code used for the simulation feature in the appendix.

\subsection{Evidence of a Beta Distribution}

In the classical model from \cite{EP23}, an urn is initialized at time $t=0$ with $\alpha \ge 1$ white balls and $\beta \ge 1$ black balls. Then, at each discrete time step, a ball is drawn from the urn and replaced with $m\ge 1$ additional balls of the same color. It is widespread that the proportion of white balls converges in distribution to a beta distribution $\mathcal{B}(\alpha /m,\beta /m)$ (see \cite{M08}). 

For any two reals $a,b>0$, the beta distribution $\mathcal{B}(a,b)$ has a density function

\begin{equation}
p(x,a,b)=\cfrac{x^{a-1}(1-x)^{b-1}}{B(a,b)} \mathbbm{1}_{\left\{x\in\left[0,1\right]\right\}}
\end{equation}

where $B(a,b)=\frac{\Gamma(a)\Gamma(b)}{\Gamma(a+b)}$ and $\Gamma$ is the Gamma function.

Although we consider a system of interacting urns rather than a single urn, the beta distribution stands as a strong candidate for the limit distribution.

\begin{figure}[h]
\includegraphics[width=\columnwidth]{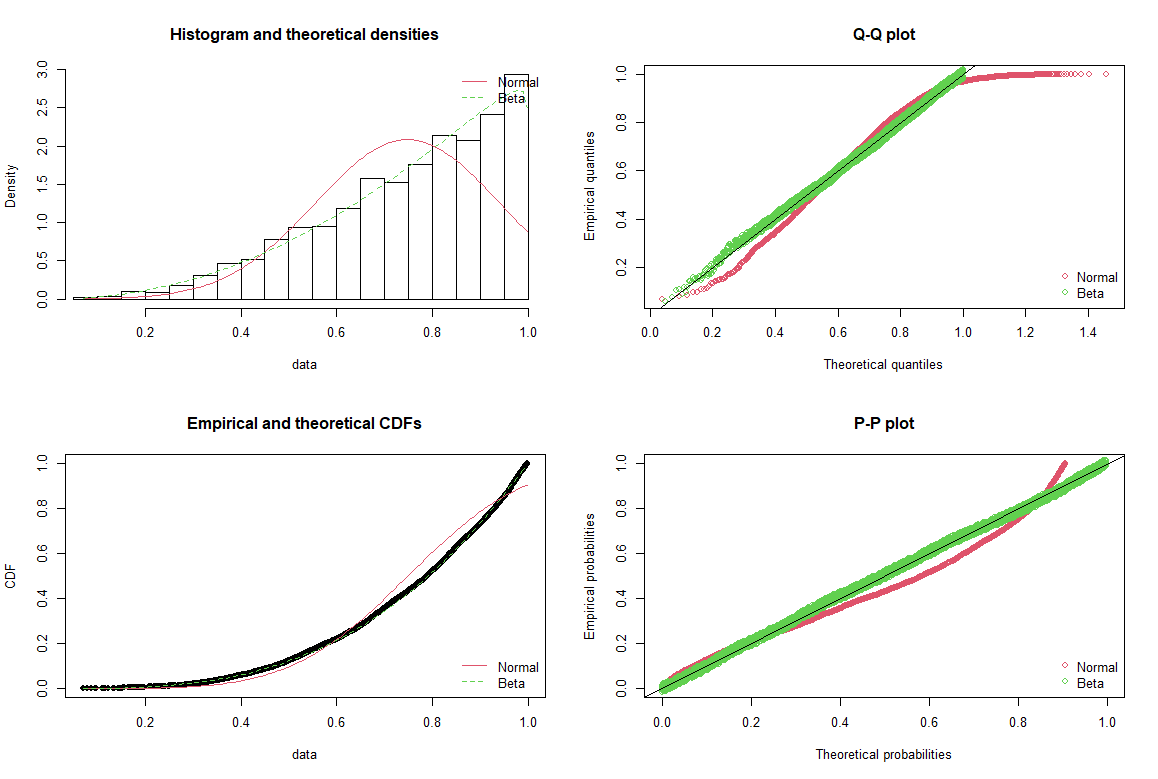}
\caption{Fitness measures for beta and normal distributions on a star graph with $\alpha=0.75$, $n=5000$ observations.}
\label{fig:betafit}
\end{figure}

\begin{conjecture}
\label{conj:dist}
The distribution of $\bar{Z}$ follows a beta distribution $\mathcal{B}(a,b)$ for some $a,b,>0$ which depend only on $\alpha$ and $G$.
\end{conjecture}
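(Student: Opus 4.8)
We outline a possible route toward \cref{conj:dist}. By \cref{th:consensus} the whole vector $(Z_i^t)_{i\in V}$ converges almost surely to $\bar Z\,(1,\dots,1)$, so it suffices to identify the law of the scalar $\bar Z$. The natural first target is the \emph{regular} case. If $G$ is $d$-regular then $S_i^t=1+dt$ for every $i$, and a one-line computation from \cref{eq:dyn} shows that the average
\begin{equation}
M^t \;=\; \frac1N\sum_{i\in V}Z_i^t \;=\; \frac{1}{N(1+dt)}\sum_{i\in V}B_i^t
\end{equation}
is a bounded martingale with respect to $(\mathcal F_t)$, since $\mathbb E\big[\sum_iB_i^{t+1}\mid\mathcal F_t\big]=\sum_iB_i^t+d\sum_jZ_j^t=\frac{1+d(t+1)}{1+dt}\sum_iB_i^t$. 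Hence $M^t\to\bar Z$ almost surely and in every $L^p$, and $\mathbb E[\bar Z]=M^0$, the initial fraction of misinformed agents. One would then try to compute all moments of $\bar Z$ by setting up recursions for $\mathbb E[(M^t)^k]$ and matching them with the Beta moments $\mathbb E[\bar Z^k]=\prod_{r=0}^{k-1}\frac{a+r}{a+b+r}$: the parameter $a+b$ would be read off from the asymptotic decay rate of the quadratic variation of $M^t$, and $a=(a+b)\,\mathbb E[\bar Z]$.

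The first genuine difficulty is that this moment hierarchy does not close. The increment equals $M^{t+1}-M^t=\frac{1}{t+1+1/d}\,(\overline X^{t+1}-M^t)$ with $\overline X^{t+1}=\frac1N\sum_jX_j^{t+1}$, and its conditional variance is proportional to $t^{-2}\big(NM^t(1-M^t)-\sum_j(Z_j^t-M^t)^2\big)$, which is \emph{not} a function of $M^t$ alone: it also involves the spread of the urns around their mean. Since all $Z_j^t\to\bar Z$ this spread vanishes, but making the limiting recursion exact requires a quantitative rate, e.g.\ $\sum_j(Z_j^t-M^t)^2=\mathcal O(t^{-1})$ in $L^2$; I would obtain it from a Lyapunov estimate on the disagreement vector $Z^t-M^t\mathbf 1$, governed by the spectral gap of the graph Laplacian along the flow \cref{eq:ode}, in the spirit of the fluctuation analysis of interacting urns in \cite{CDM16}. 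There is, however, an a priori warning: because $\overline X^{t+1}$ averages $N$ conditionally independent Bernoullis, its variance is damped by $1/N$ relative to a single $\mathrm{Bern}(M^t)$, and this damping is precisely the feature that distinguishes the Polya urn (Beta limit) from Friedman-type urns or a Wright--Fisher diffusion (which have different limits); so even in the regular case it is not obvious that the limiting recursion is the Beta one, and this is the crux.

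For the star $K_{1,n}$ used in the simulations one can bypass stochastic approximation and work with the explicit two-dimensional chain $(L^t,B_0^t)$, where $L^t=\sum_{s\le t}X_0^s$ counts the centre's black draws: all leaves carry the common proportion $Z_L^t=(B_L^0+L^t)/(1+t)$, the centre has $Z_0^t=B_0^t/(1+nt)$, and conditionally on $\mathcal F_t$ one has $B_0^{t+1}-B_0^t\sim\mathrm{Binom}(n,Z_L^t)$ and $L^{t+1}-L^t\sim\mathrm{Bern}(Z_0^t)$. Here one can attempt to compute mixed moments $\mathbb E\big[\prod_rZ_0^{t_r}\big]$ directly, or to show that the centre's draw sequence $(X_0^s)$ is asymptotically exchangeable with a Beta mixing measure; either would give \cref{conj:dist} for the star together with explicit parameters $a,b$ depending on $n$ and the signal configuration (hence on $\alpha$ after averaging).

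The real obstacle is the general non-regular case: there neither the proportions nor any linear functional of them is a martingale, the draw process is not exchangeable, and the only structural handle is the asymptotic identity $\mathbb E[\Delta\overline B^{t+1}\mid\mathcal F_t]/\Delta\overline S^{t+1}=\big(\sum_jd_jZ_j^t\big)/(2|E|)\to\bar Z$. A complete proof would seem to require either uncovering an exact exchangeable or Polya-type structure hidden in the interacting, non-regular dynamics, or a continuous-time branching embedding --- once the synchronous nature of the draws is handled --- together with a proof that the associated Kesten--Stigum-type limit variables are Gamma-distributed; the latter is in general false for a genuinely multitype mechanism, which is consistent with the statement remaining a conjecture. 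Failing this, the realistic unconditional targets are that the first two moments of $\bar Z$ match a Beta law with the stated parameters, plus the exact Beta conclusion in the regular and star cases.
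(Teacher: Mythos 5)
The statement you are trying to prove is not proved in the paper at all: it is stated as a conjecture, and the paper's only support for it is empirical (maximum-likelihood beta fits and goodness-of-fit comparisons on simulated data, as in \cref{tab:est}). So there is no paper proof to compare against, and your proposal should be judged on its own terms. On those terms, the pieces you actually compute are correct: in the $d$-regular case $M^t=\frac1N\sum_i Z_i^t$ is indeed a bounded martingale (your computation of $\mathbb{E}[\sum_i B_i^{t+1}\mid\mathcal F_t]$ checks out), the increment and conditional-variance formulas are right, and the identity $\mathbb{E}[\Delta\overline B^{t+1}\mid\mathcal F_t]/\Delta\overline S^{t+1}=\sum_j d_j Z_j^t/(2|E|)$ is correct. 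In fact the martingale observation already yields a rigorous fragment the paper does not have: in the regular case the mean of $\bar Z$ equals the initial fraction, which (modulo the paper's white/black labelling of $\alpha$) is the mean claim of \cref{conj:mean}.

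The genuine gap is the one you name yourself: nothing in the proposal forces the \emph{Beta} law, which is the entire content of \cref{conj:dist}. The moment hierarchy for $M^t$ does not close because the conditional variance involves the disagreement term $\sum_j(Z_j^t-M^t)^2$, and even granting a quantitative decay of that term, the surviving recursion has the $1/N$-damped noise $\overline X^{t+1}$ rather than a single Bernoulli draw; as you correctly point out, this is exactly the structural feature that separates Polya-type (Beta-limit) dynamics from Friedman/Wright--Fisher-type behaviour, so the regular case is not settled, the star case is only a programme (no mixed moments or asymptotic exchangeability are actually established), and the non-regular case has no martingale or exchangeable structure to exploit. Your proposal is therefore an honest and technically sound research plan, not a proof --- which is consistent with the statement's status in the paper, but it should not be presented as resolving the conjecture; the realistic deliverables you list at the end (moment matching, regular and star cases) are the right intermediate targets.
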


We collected the values of the limit belief for star, $k$-regular and complete graphs of fixed size $N=100$ and fixed values of $\alpha$. As all the simulations were run independently, for any given graph structure and value of $\alpha$, the set of values of the limit beliefs is an i.i.d. sample. A beta distribution fitting was computed by maximum likelihood estimation. Other distributions were fitted in order to assess goodness-of-fit using usual criteria. \cref{fig:betafit} compares the fitted distributions assuming respectively a normal distribution and a beta distribution. The graph displays empirical and theoretical densities, quantile-quantile plots, cumulative distribution functions and probability plots. The beta distribution clearly appears as well fitted to the sample. Additional plots feature in the appendix for different network structures and values of $\alpha$. In all the aforementioned cases, fitness measures yielded similar results, where the beta distribution clearly appears as more adapted to describe the data.

\subsection{Estimation of the Parameters}

Assuming the limit belief does follow a beta distribution, we are able to estimate its parameters using maximum likelihood estimation for various networks and values of $\alpha$. Our first conjecture concerns the average of the limit distribution. All the simulations we ran conducted to a strong belief in that its value is $\alpha$. In other terms, the expected proportion of black balls in the limit is equal to the expected number of misinformed agents \textit{ex-ante}.  

\begin{conjecture}
\label{conj:mean}
For any $i \in N$, $Z_i^t \to \bar{Z}$ where $\bar{Z}\sim\mathcal{B}(a,b)$ with $a,b>0$ such that $\frac{a}{a+b}=\alpha$.
\end{conjecture}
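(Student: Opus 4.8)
The plan is to separate the statement into two parts of very different difficulty: the value of the mean $\mathbb{E}[\bar Z]$, which I expect is provable outright, and the identification of the full limit law as a beta distribution, which is the genuine obstacle (and presumably why only a conjecture is stated). Throughout I take for granted, by \cref{th:consensus}, that the urns converge to a common limit $\bar Z\in[0,1]$ almost surely.

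\emph{The mean.} The point is that the vector of expected proportions never moves off its initial value. Taking conditional expectations in \cref{eq:dyn} (equivalently, using \cref{eq:expected_dyn}) gives $\mathbb{E}[Z_i^{t+1}-Z_i^t\mid\mathcal{F}_t]=\bigl(-d_iZ_i^t+\sum_{j\in N(i)}Z_j^t\bigr)/(1+d_i(t+1))=-(LZ^t)_i/(1+d_i(t+1))$, where $L=D-A$ is the graph Laplacian. Since $L$ is a fixed matrix and the scalars $1/(1+d_i(t+1))$ are deterministic, a further expectation produces the \emph{deterministic} recursion $\zeta^{t+1}=\zeta^t-\Gamma^{t+1}L\zeta^t$ for $\zeta^t:=(\mathbb{E}[Z_i^t])_{i\in V}$, with $\Gamma^{t+1}:=\diag\bigl(1/(1+d_i(t+1))\bigr)$. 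Because every urn is initialized with a single ball matching a signal that is correct with the same probability $\alpha$, the coordinates of $\zeta^0$ are all equal, so $\zeta^0\in\ker L$; and every $v\in\ker L$ is a fixed point of the linear map $I-\Gamma^{t+1}L$, hence $\zeta^t\equiv\zeta^0$ for all $t$. Since $0\le Z_i^t\le 1$ and $Z_i^t\to\bar Z$ a.s., dominated convergence yields $\mathbb{E}[\bar Z]=\lim_t\mathbb{E}[Z_i^t]=\mathbb{E}[Z_i^0]$, the common initial value, i.e.\ the ex ante probability that a given agent is misinformed — the value asserted for $a/(a+b)$. This argument uses neither \cref{conj:dist} nor any information about the shape of $\bar Z$.

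\emph{The shape.} Granting \cref{conj:dist}, it then suffices to verify that the limit moments $m_k:=\mathbb{E}[\bar Z^{\,k}]$ obey the beta recursion $m_{k+1}=\frac{a+k}{a+b+k}\,m_k$; together with $m_1=a/(a+b)$ from the previous paragraph this pins down $(a,b)$. The natural route is to write forward recursions for the mixed moments $\mathbb{E}\bigl[\prod_{i\in V}(Z_i^t)^{k_i}\bigr]$ directly from \cref{eq:dyn}, then pass to the limit: since all $Z_i^t$ converge to the same $\bar Z$, only the total degree $k=\sum_ik_i$ should survive, collapsing the family to a one-parameter recursion for $m_k$. The complete graph is the most favourable case to attempt first: there $\bar Z^t:=\bigl(\tfrac1N\sum_iB_i^t\bigr)/(1+(N-1)t)$ is already a bounded martingale — one checks this using $\sum_{j\ne i}X_j^{t+1}=\sum_jX_j^{t+1}-X_i^{t+1}$ and $\sum_iZ_i^t=N\bar Z^t$ — so the consensus value is the limit of a genuinely Pólya-type quantity, and the only remaining task is to show that the small asymmetries $Z_i^t-\bar Z^t$ do not distort the limiting law.

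\emph{Main obstacle.} That last task is exactly where the difficulty lies. The increment of $B_i^t$ is driven by the \emph{neighbours'} proportions $Z_j^t$, not by $Z_i^t$ itself, so the mixed-moment recursions do not close at any finite $t$ — the network coupling persists — and correspondingly $(Z^t)$ is neither exchangeable nor a martingale, which removes the classical de Finetti / Pólya route to a beta law. Making the limiting one-parameter recursion exact seems to require a quantitative consensus rate, e.g.\ $\max_i\bigl|Z_i^t-\tfrac1N\sum_jZ_j^t\bigr|=O(t^{-1/2})$ with high probability, governed by the spectral gap of $L$, strong enough that the error terms it generates in the moment recursion are summable against the steps $\gamma^t$ and vanish in the limit. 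Even short of the beta law, the same fluctuation estimates ought to show that $\bar Z$ has no atoms in $(0,1)$, which, combined with the full-support property established above, would at least confirm that $\bar Z$ is genuinely non-degenerate and absolutely continuous.
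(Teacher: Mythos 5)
The statement you are addressing is a \emph{conjecture}: the paper offers no proof of it, only maximum-likelihood fits and the empirical means in \cref{tab:est}, so there is no argument of the paper's to compare yours against. Judged on its own terms, your proposal splits into a rigorous part and an honest admission of the open part. The rigorous part — conservation of the mean — is correct: from \cref{eq:dyn} and \cref{eq:expected_dyn} the vector $\zeta^t=(\mathbb{E}[Z_i^t])_i$ satisfies $\zeta^{t+1}=(I-\Gamma^{t+1}L)\zeta^t$ with $\Gamma^{t+1}=\diag(1/(1+d_i(t+1)))$, the initial vector is constant across coordinates and hence lies in $\ker L$, so $\zeta^t\equiv\zeta^0$, and bounded convergence together with \cref{th:consensus} identifies $\mathbb{E}[\bar Z]$ with the common initial mean. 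Your side computation that the average composition on the complete graph is a bounded martingale also checks out. This is a genuine addition: it proves the mean claim unconditionally, something the paper only infers from simulations.

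Two caveats. First, watch the labelling: with the paper's conventions ($Z_i^t$ is the proportion of \emph{black} balls, i.e.\ the weight on the wrong state, and the initial signal is correct with probability $\alpha$), your conserved quantity is $\mathbb{E}[Z_i^0]=1-\alpha$, whereas \cref{conj:mean} asserts $a/(a+b)=\alpha$. The paper is internally inconsistent here (its prose before the conjecture agrees with your $1-\alpha$, its formula and \cref{tab:est} say $\alpha$), and your phrase ``the value asserted for $a/(a+b)$'' papers over this; you should state explicitly which convention you adopt. Second, the distributional content of the statement — that $\bar Z$ is beta at all — is not established by your proposal: you condition on \cref{conj:dist} and then sketch a moment-recursion programme whose key missing ingredient (a quantitative consensus rate making the mixed-moment recursions close in the limit) you correctly identify as the obstacle. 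That gap is real, but it is exactly why the paper states this as a conjecture rather than a theorem; your proposal reduces the conjecture to \cref{conj:dist} plus a fluctuation estimate, which is a reasonable framing, not a completed proof.
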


To support this conjecture, we simulated the communication dynamics on a $10$-regular network for increasing values of alpha and fitted a beta distribution to the empirical distribution. \cref{tab:est} provides estimates of the parameters and sample means.

\begin{table}[h]
\centering
  \begin{tabular}{ c | c | c | c | c }
		$\alpha$ &  $\hat{a}$ & $\hat{b}$ & $\hat{a}+\hat{b}$ & Empirical mean\\
    \hline
		0.1 & 6.09 & 54.63 & 60.72 & 0.100 \\
		0.2 & 11.9 & 47.6 & 59.5 & 0.199 \\
		0.3 & 18.3 & 42.7 & 61 & 0.300 \\
		0.4 & 24.8 & 37.2 & 62 & 0.399 \\
		0.5 & 30.8 & 30.8 & 61.6 & 0.499 \\
		0.6 & 37.4 & 24.9 & 62.3 & 0.601 \\
		0.7 & 43.1 & 18.5 & 61.5 & 0.699 \\
		0.8 & 48.4 & 12.1 & 60.5 & 0.8005 \\
		0.9 & 51.19 & 5.68 & 56.87 & 0.900 \\
    \hline
  \end{tabular}
\caption{Estimates $\hat{a}$ and $\hat{b}$ of $a$ and $b$ and empirical mean in a $10$-regular network of size $N=100$ (14000 obs.).}
\label{tab:est}
\end{table}

Similar simulations for other network structures yield identical results. These outputs strongly suggest that the belief updating dynamics\cref{eq:dyn} bear some form of asymptotical exchangeability. It remains to be proved yet it would strongly support \cref{conj:dist}. Finally, we observe consistency in the sum of the estimates $\hat{a}$ and $\hat{b}$, which support the following conjecture.

\begin{conjecture}
Fix a network $G$ and let $\bar{Z}_{\alpha}$ be the beta distribution of the limit belief given $\alpha$. Let $a_{\alpha}$ and $b_{\alpha}$ be its parameters. Then the mapping $\alpha \mapsto a_{\alpha} + b_{\alpha}$ is constant.
\end{conjecture}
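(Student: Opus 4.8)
\emph{Proof idea.} The plan is to reduce the statement to a computation of the \emph{variance} of the consensus, and to carry out that computation through a second-order recursion that closes at quadratic order. Assume, as in the statement, $\bar Z_\alpha\sim\mathcal B(a_\alpha,b_\alpha)$. For a beta law one has $\mathrm{Var}(\bar Z_\alpha)=\mathbb{E}[\bar Z_\alpha](1-\mathbb{E}[\bar Z_\alpha])/(a_\alpha+b_\alpha+1)$, so it suffices to prove that $\mathrm{Var}(\bar Z_\alpha)=v_G\,\alpha(1-\alpha)$ and $\mathbb{E}[\bar Z_\alpha](1-\mathbb{E}[\bar Z_\alpha])=\alpha(1-\alpha)$ for some constant $v_G>0$ depending only on $G$; then $a_\alpha+b_\alpha=v_G^{-1}-1$ is independent of $\alpha$.

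The key observation is that $\alpha$ enters the model \emph{only} through the initial condition $Z^0=\mathbf s$, where $\mathbf s=(s_i)_{i\in V}$ is an i.i.d.\ $\mathrm{Bernoulli}$ vector with parameter $p\in\{\alpha,1-\alpha\}$ (the ambiguity is harmless because $\alpha(1-\alpha)$ is symmetric), independent of the draw randomness $\omega$, which is itself $\alpha$-free. Writing $m_1(\mathbf s):=\mathbb{E}_\omega[\bar Z(\mathbf s,\omega)]$ and $m_2(\mathbf s):=\mathbb{E}_\omega[\bar Z(\mathbf s,\omega)^2]$, and using $s_i^2=s_i$, each of these functions on $\{0,1\}^V$ has a unique multilinear (Möbius) expansion $m_r(\mathbf s)=\sum_{A\subseteq V}\widehat m_r(A)\prod_{i\in A}s_i$, whence $\mathbb{E}[\bar Z_\alpha^{\,r}]=\sum_A\widehat m_r(A)\,p^{|A|}$. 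The heart of the argument is the structural claim
\begin{equation*}
\widehat m_1(A)=0\ \text{ for }|A|\ge 2,\qquad\text{and}\qquad \widehat m_2(A)=0\ \text{ for }|A|\ge 3 .
\end{equation*}
Granting this, $\mathbb{E}[\bar Z_\alpha]=p\sum_i\widehat m_1(\{i\})$ (and $\widehat m_1(\emptyset)=m_1(\mathbf 0)=0$); evaluating at $\mathbf s=\mathbf 1$, where the dynamics is degenerate and $\bar Z\equiv 1$, gives $\sum_i\widehat m_1(\{i\})=1$, so $\mathbb{E}[\bar Z_\alpha]=p$. Likewise $\widehat m_2(\emptyset)=0$ and, with $v:=\sum_i\widehat m_2(\{i\})$, the identity $m_2(\mathbf 1)=1$ forces $\sum_{|A|=2}\widehat m_2(A)=1-v$, so $\mathbb{E}[\bar Z_\alpha^2]=vp+(1-v)p^2$ and therefore $\mathrm{Var}(\bar Z_\alpha)=vp(1-p)=v\,\alpha(1-\alpha)$, with $v_G=v$ a graph constant. (This also recovers \cref{conj:mean}.)

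To establish the structural claim I would track $\mathbb{E}[Z^t]$ and the second-moment matrix $P^t_{ij}(\mathbf s):=\mathbb{E}_\omega[Z_i^tZ_j^t;\,Z^0=\mathbf s]$. From \cref{eq:dyn}, $Z_i^{t+1}=(B_i^t+\sum_{k\in N(i)}X_k^{t+1})/S_i^{t+1}$ with $S_k^t=1+d_kt$ deterministic; taking $\mathbb{E}[\,\cdot\mid\mathcal F_t]$ and using that the time-$(t{+}1)$ draws are pairwise independent given $\mathcal F_t$ with $X_k^{t+1}\mid\mathcal F_t\sim\mathrm{Bernoulli}(Z_k^t)$ — so $\mathbb{E}[X_kX_l\mid\mathcal F_t]=Z_k^tZ_l^t$ for $k\neq l$, and the diagonal produces the term $Z_k^t(1-Z_k^t)=Z_k^t-(Z_k^t)^2$ — one checks that $\mathbb{E}[Z_i^{t+1}\mid\mathcal F_t]=\sum_l(M^t)_{il}Z_l^t$ with $M^t$ the deterministic row-stochastic matrix $(M^t)_{ii}=S_i^t/S_i^{t+1}$, $(M^t)_{ik}=1/S_i^{t+1}$ for $k\in N(i)$; and that $\mathbb{E}[Z_i^{t+1}Z_j^{t+1}\mid\mathcal F_t]$ is a polynomial in $(Z_k^t)_k$ of total degree at most $2$ with coefficients depending only on $t$ and the degrees. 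Taking full expectations yields a closed affine recursion: $\mathbb{E}[Z^{t+1}]=M^t\mathbb{E}[Z^t]$ (so $\mathbb{E}[Z^t]=M^{t-1}\cdots M^0\mathbf s$ is linear in $\mathbf s$), and $P^{t+1}=\Phi^t(P^t)+\Psi^t(\mathbb{E}[Z^t])$ with $\Phi^t,\Psi^t$ linear and $\mathbf s$-free. Since $\mathbb{E}[Z^0]=\mathbf s$ and $P^0_{ij}=s_is_j$ are multilinear of degree $\le 1$ and $\le 2$ respectively, and these degree classes are linear subspaces preserved by the recursion, an induction on $t$ gives that $\mathbb{E}[Z^t]$ stays degree $\le 1$ and $P^t$ stays degree $\le 2$ in $\mathbf s$ for every $t$. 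Finally, boundedness and \cref{th:conv} ($Z_i^t\to\bar Z$) give, by dominated convergence, $m_1(\mathbf s)=\lim_t\mathbb{E}[Z_i^t]$ and $m_2(\mathbf s)=\lim_t P^t_{ii}(\mathbf s)$ for each of the finitely many $\mathbf s\in\{0,1\}^V$; a pointwise limit of vectors in a closed subspace stays in it, which is exactly the vanishing of the high-degree coefficients.

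The main obstacle is the step showing the conditional second-moment recursion \emph{closes at quadratic order}: this rests squarely on pairwise independence of simultaneous draws — anything weaker would leave an irreducible $\sum Z_kZ_lZ_m$ term — and the bookkeeping of the coefficients in $\Phi^t,\Psi^t$, though routine, must be done with care (the nonvanishing $Z_k^t(1-Z_k^t)$ diagonal term is what makes the variance nonzero and ultimately fixes the value of $v_G$). A secondary caveat is that the whole reduction presupposes \cref{conj:dist}: the argument above only shows, unconditionally, that $\mathbb{E}[\bar Z_\alpha]$ and $\mathrm{Var}(\bar Z_\alpha)$ have the required form, and it is the beta assumption that converts this moment identity into the stated invariance of $a_\alpha+b_\alpha$. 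Pushing the same degree-counting to the $k$-th moment would show $\mathbb{E}[\bar Z_\alpha^{\,k}]$ is a polynomial of degree $\le k$ in $\alpha$ — consistent with beta — but that extension needs \emph{mutual} independence of draws, and matching the coefficients to the beta moments is exactly where a self-contained proof of \cref{conj:dist} would have to do genuine work.
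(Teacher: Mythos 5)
Your proposal takes a genuinely different route from the paper for the simple reason that the paper has no proof of this statement at all: it is stated as a conjecture and supported only by maximum-likelihood fits of $\hat a+\hat b$ across values of $\alpha$ (see \cref{tab:est}), whereas you give a conditional mathematical argument. Your reduction is sound in outline. Since the draw randomness is $\alpha$-free and $\alpha$ enters only through the i.i.d.\ Bernoulli initial signals $\mathbf{s}$, the first- and second-moment recursions do close (the quadratic closure rests, as you say, on the paper's pairwise-independence assumption, which gives $\mathbb{E}[X_k^{t+1}X_l^{t+1}\mid\mathcal F_t]=Z_k^tZ_l^t$ for $k\neq l$ and $Z_k^t$ on the diagonal), so $\mathbb{E}[\bar Z]$ is multilinear of degree $\le 1$ and $\mathbb{E}[\bar Z^2]$ of degree $\le 2$ in $\mathbf{s}$; the evaluations at $\mathbf{s}=\mathbf{0}$ and $\mathbf{s}=\mathbf{1}$ then force $\mathbb{E}[\bar Z_\alpha]=p$ and $\mathrm{Var}(\bar Z_\alpha)=v_G\,p(1-p)$ with $p\in\{\alpha,1-\alpha\}$, and under \cref{conj:dist} the beta identity $a_\alpha+b_\alpha+1=\mathbb{E}[\bar Z_\alpha](1-\mathbb{E}[\bar Z_\alpha])/\mathrm{Var}(\bar Z_\alpha)=1/v_G$ gives the claim. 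What your approach buys beyond the paper's simulations: an actual (conditional) proof, an unconditional moment statement that also covers \cref{conj:mean} in the mean/variance sense, and an identification of the constant as $a_\alpha+b_\alpha=v_G^{-1}-1$, which is in principle computable from the limit of the second-moment recursion and testable against \cref{tab:est} (there $v_G\approx 1/62$). What still needs attention before this counts as complete: the coefficient bookkeeping for $\Phi^t,\Psi^t$ if one wants $v_G$ explicitly (the degree bound alone does not need it); strict positivity of $v_G$ so the division is legitimate (this follows from the paper's full-support result for $\alpha\in(0,1)$, but say so); the reliance on the paper's convergence and consensus theorems and on \cref{conj:dist} itself, which is fair since the statement presupposes the beta form; and the sign convention ($Z$ counts black balls, so $p=1-\alpha$ — harmless here by symmetry of $\alpha(1-\alpha)$, but it affects how you phrase the recovery of \cref{conj:mean}).
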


To support this conjecture, we explored results from simulations on the three aforementioned structures. We believe that discrepancies as observed in \cref{tab:est} are due to noise introduced by the random number generator as they mostly appear for extreme values of $\alpha$.

\section{Future Work}

Further work has been done in trying to characterize the limit distribution of the consensus analytically, yet no method has yield convincing results so far. Based on the simulations we ran, we have strong hints that suggest this limit distribution is a beta distribution. This confirms the intuition that, as a whole, the system acts as a global Polya urn. Further exploration of possible exchangeability properties might help in supporting this intuition and provide some tools for a closed-form characterization. To our knowledge, no paper has been able to achieve such formal results on interacting urn systems. \\

A better understanding of the limit distribution would provide a better applicability of our results, in particular in designing a model of strategic disinformation with disinformants being parts of the network. That application was the initial motivation of the paper and remains its main objective.

\bibliographystyle{apalike}
\bibliography{proj_nx}

\end{document}